\documentclass[pra,twocolumn,showpacs]{revtex4-2}

\usepackage{amsmath}
\usepackage{latexsym}
\usepackage{amssymb}
\usepackage{graphicx}
\usepackage[colorlinks=true, citecolor=blue, urlcolor=blue]{hyperref}
\usepackage{float}
\usepackage{amsfonts}
\usepackage{textcomp}
\usepackage{mathpazo}
\usepackage{comment}
\usepackage{xr}

\usepackage{bbm}

\usepackage{xcolor}
\definecolor{myurlcolor}{rgb}{0,.5,.5}
\definecolor{mycitecolor}{rgb}{0,.6,0}
\definecolor{myrefcolor}{rgb}{2,0,0}
\usepackage{hyperref}
\hypersetup{colorlinks,
linkcolor=myrefcolor,
citecolor=mycitecolor,
urlcolor=myurlcolor}

\usepackage[draft]{fixme}
\usepackage{amsmath,bbm}
\usepackage{graphicx}
\usepackage{amsfonts}
\usepackage{amssymb}
\usepackage{amsmath, amssymb, amsthm,verbatim,graphicx,bbm}
\usepackage{mathrsfs}
\usepackage{color,xcolor,longtable}

\usepackage{xr}
\makeatletter
\newcommand*{\addFileDependency}[1]{
  \typeout{(#1)}
  \@addtofilelist{#1}
  \IfFileExists{#1}{}{\typeout{No file #1.}}
}
\makeatother

\newcommand*{\myexternaldocument}[1]{
    \externaldocument{#1}
    \addFileDependency{#1.tex}
    \addFileDependency{#1.aux}
}

\myexternaldocument{manuscript_PRL}

\newcommand{\beq}[0]{\begin{equation}}
\newcommand{\eeq}[0]{\end{equation}}

\newcommand{\one}{\leavevmode\hbox{\small1\normalsize\kern-.33em1}}

\def\be{\begin{equation}}
\def\ee{\end{equation}}
\def\ben{\begin{eqnarray}}
\def\een{\end{eqnarray}}
\def\eea{\end{array}}
\def\bea{\begin{array}}

\newcommand{\Tr}[1]{\mathrm{Tr}#1}
\newcommand{\bei}{\begin{itemize}}
\newcommand{\eei}{\end{itemize}}
\newcommand{\ket}[1]{|#1\rangle}
\newcommand{\bra}[1]{\langle#1|}

\newcommand{\proj}[1]{\ket{#1}\!\!\bra{#1}}

\newcommand{\I}{\mathbbm{1}}

\newcommand{\p}{\Vec{p}}

\renewcommand{\emph}[1]{\textbf{#1}}

\makeatletter
\newtheorem*{rep@theorem}{\rep@title}
\newcommand{\newreptheorem}[2]{%
\newenvironment{rep#1}[1]{%
 \def\rep@title{#2 \ref{##1}}%
 \begin{rep@theorem}}%
 {\end{rep@theorem}}}
\makeatother

\theoremstyle{plain}
\newtheorem{thm}{Theorem}
\newtheorem*{thm*}{Theorem}
\newreptheorem{thm}{Theorem}

\newtheorem{fakt}{Fact}

\newtheorem{defn}[thm]{Definition}

\theoremstyle{definition}

\theoremstyle{remark}

\usepackage[T1]{fontenc}

\begin{document}

\title{Network steering with arbitrarily low detection efficiency of any entangled measurement}
\author{Shubhayan Sarkar}
\email{shubhayan.sarkar@ug.edu.pl}
\affiliation{Institute of Informatics, Faculty of Mathematics, Physics and Informatics,
University of Gdansk, Wita Stwosza 57, 80-308 Gdansk, Poland}

\begin{abstract}	
Quantum nonlocality and its asymmetric counterpart, quantum steering, are among the most intriguing manifestations of quantum mechanics. From a theoretical perspective, they are not only of fundamental significance but also hold promise for a wide range of applications. However, the stringent technological requirements for their experimental observation in a loophole-free way have largely restricted their realization to foundational demonstrations. A major challenge in these setups is the limited detection efficiency of current detectors as to observe nonlocality or steering in the standard scenarios, one requires detectors above a certain critical efficiency which can only be achieved with superconducting detectors. Considering the simplest quantum network, we demonstrate here that quantum steering between two parties, can be demonstrated for any non-zero detection efficiency, if the sources in the network generate states above a critical visibility well-within the current practical limits. This form of quantum steering in networks, is termed swap-steering. Moreover, when the sources are perfect, swap-steering can be observed using any entangled measurement on the untrusted side, even with arbitrarily low detection efficiency. Consequently, two major loopholes, the detection-loophole as well as free-will loophole can be closed easily in quantum steering experiments, when implemented using networks.
\end{abstract}


\maketitle

{\it{Introduction---}} Quantum nonlocality \cite{Bell,Bell66} stands as one of the most paradigm-shifting concepts to have emerged in modern physics that has not just reshaped our foundational understanding of the physical world but also has led to several useful applications such as device-independent (DI) quantum information protocols, including DI cryptography \cite{DICrypto,NonlocalityReview}. While the theoretical side of nonlocality along with the applications have been extensively explored, observing them in a loophole-free way still requires state-of-art quantum infrastructure such as superconducting detectors, genuine random number generators along with sources that generate expected states with high visibility \cite{exp1,exp2,exp3,exp4,exp5,exp6,exp7}. As a matter of fact, beyond the simplest scenarios, quantum nonlocality or quantum steering (assymetric form of quantum nonlocality) \cite{Wiseman} has not yet been demonstrated in a loophole-free way. This limits the applicability and explorability of quantum nonlocality from a practical perspective.

There are three major loopholes in any nonlocality (or steering) experiments \cite{NonlocalityReview}. First, locality-loophole, which states that the measurements on both subsystems should be fast enough such that no signal can reach from one to other. This loophole can be closed by large spatial distance and fast electronics well-within the current practical reach. Second, the detection-loophole, which states that below a certain critical efficiency of the detectors, there exist local models to explain the observed correlations. To close this loophole, one requires superconducting detectors with high efficiency. Third, the free-will loophole, where measurement settings on both subsystems need to chosen independent each other, while having the capability of switching them at will during the course of the experiment. The latter two loopholes pose significant experimental challenges for demonstrating nonlocality (or steering), as closing them requires stringent technological requirements, making loophole-free implementations extremely demanding in practice. Consequently, exploring scenarios that enable the observation of nonlocality or steering while imposing less stringent technological requirements and offering greater resilience to noise is an important direction for further research.

Quantum networks with multiple sources provide a novel perspective on this problem. Here, we delve into a recently introduced notion of network steering \cite{netstee}, called swap-steering \cite{Sarkar2024networkquantum}, where two parties can observe quantum steering, one of them being trusted, with single fixed measurements, thus readily closing the free-will loophole. We find that allowing for tomographically complete measurements on the trusted side allows the observation of quantum steering in this network with arbitrary low detection efficiency of measurements on the untrusted side, if the sources generate states above a certain critical visibility. We then show that if the sources are perfect, then by using any entangled measurement on the untrusted side with arbitrary low detection efficiency one can observe swap-steering. Considering parameters from current practical experiments, we finally conclude that, quantum steering in networks can be demonstrated using room-temperature detectors in a loophole-free way that does not even require genuine random number generators.

For our purpose, we first adapt the idea of process tomography \cite{tomo1, tomo2, tomo3, tomo4,tomo5,tomo6} to witness entanglement of any measurement. Then, using the corresponding witness, we construct the swap-steering witnesses, allowing us to observe steering in quantum networks among two parties, using any entangled measurement. Consequently, we provide a one-sided DI witness of entanglement of any measurement. Finally, considering realistic detectors and sources, we find the critical visibility beyond which swap-steering can be demonstrated using any entangled measurement.

{\it{Quantum measurement tomography---}} We begin by describing the tomography of quantum measurements. 
Consider a quantum measurement $\{M_i\}$ where $M_i$ denotes the measurement elements such that they are positive semi-definite and $\sum_iM_i=\I$. For projective measurements, an additional condition is $M_iM_j=M_i\delta_{ij}$. Considering a set of informationally complete density matrices $\rho_j$ that are known (trusted), one can reconstruct the measurement by evaluating the probabilities, also referred to as correlations, $p_{ij}=\Tr(M_i\rho_j)$. 
Here one additionally assumes the dimension of the Hilbert space on which the measurements act. A broader class of tomography is known as process tomography where one considers tomography of quantum channels and has to trust the input states along with measurements \cite{tomo1, tomo2, tomo3, tomo4,tomo5,tomo6}.

At most times, it will not be required to reconstruct the complete measurement but only to confirm some relevant properties about it. 
Here we are interested in the entanglement in composite quantum measurements and thus classify them into two classes:

\begin{defn}[Separable measurements] \label{defsep} Consider the measurement $M_i$ such that all the measurement elements $\{M_i\}$ are separable, that is, $M_i\propto \sum_j\sigma_{i,j}\otimes\sigma'_{i,j}\otimes\ldots\ \forall i$, then  $\{M_i\}$ is a separable measurement. 
\end{defn}
Using the above definition, we define entangled measurements.
\begin{defn}[Entangled measurements] \label{defent} Consider the measurement $\{E_i\}$ such that at least one of the  measurement elements $E_i$ is not separable, then  $\{E_i\}$ is an entangled measurement. 
\end{defn}

Let us now construct witnesses that can detect whether a given composite measurement is separable or entangled.  

{\it{Witnesses---}} It is well-known that due to Hahn-Banach theorem, for every entangled state $\rho_e$ there exists an entanglement witness $\mathcal{W}_{\rho_e}$ such that $\Tr(\mathcal{W}_{\rho_e}\rho_e)<0$ and $\Tr(\mathcal{W}_{\rho_e}\sigma)\geq0$ where $\sigma$ represents any separable state. Similarly, we can find a general result to witness entanglement in quantum measurements that straightaway follows from the Hahn-Banach theorem.
\begin{fakt} For any entangled measurement $\{E_i\}$ def. \ref{defent}, there exists an entanglement witness $\mathcal{W}$ such that 
\begin{equation}\label{eq1}
   \min_i\Tr(\mathcal{W}E_i)<0,\quad \min_i\Tr(\mathcal{W}M_i)\geq0
\end{equation}
for every separable measurement $\{M_i\}$.
\end{fakt}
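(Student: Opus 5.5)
The idea is to reduce the statement to the standard entanglement-witness theorem for states recalled just above, applied to a normalized version of one entangled measurement element. Since $\{E_i\}$ is entangled in the sense of Def.~\ref{defent}, there is an index $k$ such that $E_k$ is not separable. In particular $E_k\neq 0$. Because $E_k$ is positive semi-definite, $\Tr(E_k)>0$, so
\begin{equation*}
\rho_e = \frac{E_k}{\Tr(E_k)}
\end{equation*}
is a valid density matrix. It is entangled: if it were a convex combination of product states, then so would be $E_k$ up to the positive factor $\Tr(E_k)$, contradicting the choice of $k$.

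Next I invoke the Hahn--Banach separation for states. In finite dimension the set of separable states is convex and compact: it is the convex hull of the compact set of product pure states, and by Carath\'eodory this hull is closed. The point $\rho_e$ lies outside this set. Hence there is a Hermitian operator $\mathcal{W}$ with $\Tr(\mathcal{W}\rho_e)<0$ and $\Tr(\mathcal{W}\sigma)\geq 0$ for every separable state $\sigma$. Multiplying by $\Tr(E_k)>0$ gives $\Tr(\mathcal{W}E_k)<0$, and therefore $\min_i\Tr(\mathcal{W}E_i)\leq \Tr(\mathcal{W}E_k)<0$. This is the first inequality in (\ref{eq1}).

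For the second inequality, let $\{M_i\}$ be any separable measurement as in Def.~\ref{defsep}. Each element has the form $M_i=c_i\sum_j \sigma_{i,j}\otimes\sigma'_{i,j}\otimes\ldots$ with $c_i\geq 0$ and the $\sigma$'s positive, which is the natural reading of ``$\propto$'' for a positive operator. Equivalently, $M_i=\lambda_i\sigma_i$ with $\lambda_i=\Tr(M_i)\geq0$ and $\sigma_i$ a separable state; when $M_i=0$ the bound below is trivial. Then $\Tr(\mathcal{W}M_i)=\lambda_i\Tr(\mathcal{W}\sigma_i)\geq 0$ for every $i$, so $\min_i\Tr(\mathcal{W}M_i)\geq0$.

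The only point needing care is the closedness and convexity of the separable set, which the separation argument requires. I would handle it by the finite-dimensional compactness argument above. It also helps to note that the witness depends only on the single element $E_k$, so no condition coupling different outcomes, such as $\sum_i M_i=\I$, is needed anywhere in the argument.
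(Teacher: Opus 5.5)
Your proof is correct and takes the same route as the paper: choose a non-separable element $E_k$ and take the standard Hahn--Banach entanglement witness for it, which is automatically nonnegative on every element of a separable measurement. You also make explicit several steps the paper leaves implicit: normalizing $E_k$ to a state, the compactness of the separable set needed for strict separation, and the check of the second inequality via $M_i=\mathrm{Tr}(M_i)\,\sigma_i$.
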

\begin{proof}
    Consider that the element $E_k$ of $\{E_i\}$ is entangled. Considering the witness $\mathcal{W}_{E_k}$, we have that $\Tr(\mathcal{W}_{E_k}E_k)<0$ and $\Tr(\mathcal{W}_{E_k}\sigma)\geq0$ for every separable state $\sigma$. As witnesses exist for every entangled state, thus every entangled measurement can be witnessed using the construction \eqref{eq1}.
\end{proof}

\begin{figure}[t]
\begin{center}
\includegraphics[width=.6\linewidth]{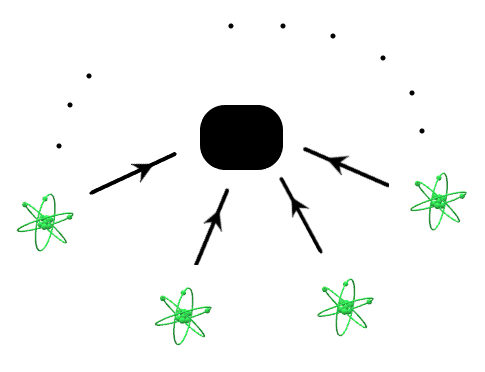}
    \caption{Witnessing entanglement in measurements. To detect entanglement in the measurement acting on $N$ subsystems, one needs to consider $N$ independent trusted sources that generate local states and send them to the untrusted measurement. From the obtained statistics, one can infer whether the measurement has entanglement or not.}
    \label{fig1}
    \end{center}
\end{figure}
We do not delve into the details of further construction of the witnesses of entanglement in measurements as it follows directly from constructing witnesses of entangled states which has been extensively studied \cite{QUANTUMENT, Guhne_2009}. We focus here on the experimental implementation of these witnesses. As an example, let us consider the Bell-basis measurement (BM) $\{\proj{\phi_{+}},\proj{\phi_{-}},\proj{\psi_{+}},\proj{\psi_{-}}\}$ where 
\begin{equation}\label{Amea1}
    \ket{\phi_{\pm}}=\frac{1}{\sqrt{2}}\left(\ket{00}\pm\ket{11}\right),\quad
    \ket{\psi_{\pm}}=\frac{1}{\sqrt{2}}\left(\ket{01}\pm\ket{10}\right).
\end{equation}
Now, to detect entanglement in Bell-basis measurement, a possible witness $\mathcal{W}_{BM}$ is given by $\mathcal{W}_{BM}=1/2\I-\proj{\phi_+}$. A way to implement this witness is to prepare the state $\ket{\phi_+}$ and act it on the measurement and if one satisfies the criterion \eqref{eq1}, then the measurement is confirmed to be entangled. However, entangled states is a costly resource and preparing particular entangled states is even more difficult. Thus, it is beneficial to express this witness in terms of local states which is much simpler to generate [see Fig. \ref{fig1}]. 

Any witness $\mathcal{W}$ can be decomposed into the form 
\begin{equation}
    \mathcal{W}=\sum_{ij\ldots} c_{ij\ldots}\proj{a_i}\otimes\proj{b_j}\otimes\ldots
\end{equation}
where $\proj{a_i},\proj{b_j},\ldots$ are positive semi-definite for any $i,j,\ldots$. Normalising and absorbing the factors into $c_{ij\ldots}$, we can safely conclude that they are valid density matrices. However, this might not be optimal in the sense that one might be required to measure a large number of correlations to conclude that the measurement is entangled. For instance, the witness $\mathcal{W}_{BM}$ decomposes as
\begin{equation}
    \mathcal{W}_{BM}=\frac{1}{4}\left(\I-\sum_{i=0}^2\sum_{j,j'=0,1}(-1)^{j+j'}\proj{\mu_{i,j}}\otimes\proj{\mu_{i,j'}}\right)
\end{equation}
where $\ket{\mu_{0,j}}=\ket{j},\ket{\mu_{1,j}}=1/\sqrt{2}(\ket{0}+(-1)^j\ket{1}),$ and $\ket{\mu_{2,j}}=1/\sqrt{2}(\ket{0}+i(-1)^j\ket{1})$. Thus, at least $12$ correlations must be considered to evaluate whether the BM is entangled. 

As it turns out, the number of correlations to observe in the particular case of BM can be significantly reduced. Consider the following witness 
\begin{eqnarray}
     \mathcal{W'}_{BM}=\frac{3}{2}\I-\sum_{i,j=0,1}\proj{\mu_{i,j}}\otimes\proj{\mu_{i,j}}.
\end{eqnarray}
It is simple to check that $\Tr(\mathcal{W'}_{BM}\sigma)\geq0$ for any separable state $\sigma$ and $\Tr(\mathcal{W'}_{BM}\proj{\phi_+})=-1/2$. Moreover one has to measure only $4$ correlations now. We do not prove here that the witness $\mathcal{W'}_{BM}$ is optimal. However, based on our extensive search this seems to be the case. Thus, we conjecture that the above witness $\mathcal{W'}_{BM}$ is an optimal witness of the BM. Similarly, one can find different optimal witnesses tailored to specific quantum measurements that can be implemented using local quantum states.

{\it{One-sided device-independent witness---}} Let us now proceed towards relaxing the trust in the input states to show that a given measurement is entangled. For this purpose, we utilise the recently contrived swap-steering scenario, which is the minimal scenario to detect any form of network nonlocality without inputs \cite{Sarkar2024networkquantum}. The swap-steering scenario consists of two parties namely, Alice and Bob in two different labs. Both of them receive subsystems from $N$ classically correlated sources $S_j$ that prepare the states $\bigotimes_{j=1}^N\rho_{A_jB_j}^{(k)}$ for $j=1,\ldots,N$ and mix them with a probability $p_k$. Here $A_j, B_j$ denote the $N$ different subsystems of Alice and Bob respectively. Bob performs a single measurement $\{E_b\}$ on the received subsystems where $b$ denotes his outcomes such that $E_b$ acts on $\bigotimes_N\mathbb{C}^d$. Alice is trusted in this context, meaning the measurements she performs on her subsystems are well-defined and known. Here, we consider the measurements of the trusted party is $\mathcal{A}_{i_1\ldots i_N}=\{\tau_{i_1}\otimes\ldots\tau_{i_N},\I-\tau_{i_1}\otimes\ldots\tau_{i_N}\}$ where $\tau_{i_j}\in \mathbb{C}^{d}$ are projectors such that $\{\tau_{i_j}\}$ span the Hilbert spaces $\mathbb{C}^{d}$ and thus are tomographically complete. Consequently, we have that $i_j=\{1,\ldots,d^2\}$ for $j=1,\ldots,N$. As we trust one of the sides in the experiment, in general such scenarios are referred to as one-sided device-independent (1SDI) (see Fig. \ref{fig2}). 

Alice and Bob repeat the experiment enough times to obtain the correlations $\vec{p}=\{p(a,b|i_1\ldots i_N)\}$ where $p(a,b|i_1\ldots i_N)$ denotes the probability of obtaining outcome $a,b$ with Alice and Bob respectively given Alice's input $i_1\ldots i_N$. These probabilities can be computed in quantum theory as
\begin{equation}
p(a,b|i_1\ldots i_N)=\sum_{k}p_k\Tr\left[(M_{a|i_1\ldots i_N}\otimes E_b)\bigotimes_{i=1}^N\rho_{A_iB_i}^{(k)}\right]
\end{equation}
where $M_{a|x}$ denote the measurement elements of Alice corresponding to input $x$. 
It is important to recall that Alice and Bob can not communicate with each other during the experiment. 
\begin{figure}[t]
\begin{center}
\includegraphics[width=.6\linewidth]{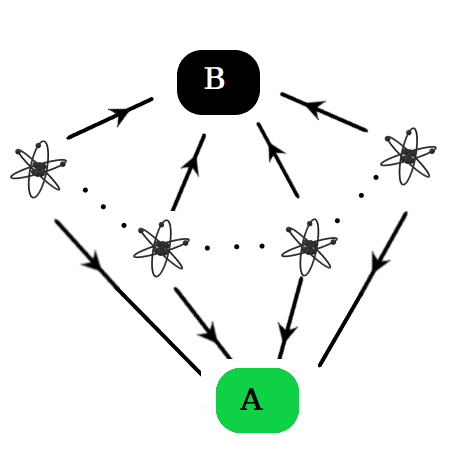}
    \caption{Swap-steering scenario. Alice and Bob are spatially separated and each of them receives $N$ subsystems from the untrusted sources. On the received subsystem Bob performs a single measurement while Alice being trusted performs a tomography on her subsystem. From the experiment, they obtain the correlations $\{p(a,b|i_1\ldots i_N)\}$.}
    \label{fig2}
    \end{center}
\end{figure}
If the correlations $\vec{p}$ admit a separable outcome-independent hidden state (SOHS) model \cite{Sarkar2024networkquantum, sarkar2024witnessingnetworksteerabilitybipartite}, then $p(a,b|i_1\ldots i_N)$ is given by
\begin{eqnarray}\label{sohs}
\sum_{k}p_k\sum_{\lambda_1,\ldots,\lambda_N}p(\lambda_1^{(k)},\ldots,\lambda_N^{(k)})\Tr{[M_{a|i_1\ldots i_N}\rho_{\lambda_1^{(k)}}\otimes\ldots\rho_{\lambda_N}^{(k)}]} \nonumber\\p(b|\lambda_1^{(k)},\ldots,\lambda_N^{(k)}).\qquad
\end{eqnarray}
Unlike the standard network considerations, here the sources are not assumed to be independent but can be classically correlated [see \cite{Sarkar2024networkquantum}]. 

Consider now that the sources generate the classically correlated state given by $\sum_{k}p_k\bigotimes_{i=1}^N\rho^{(k)}_{A_iB_i}$ and Bob's measurement $\{M_b\}$ is separable and given by $M_b=\sum_{j}\bigotimes_{i=1}^N\sigma^{(j)}_{i,b}$ such that 
$b=0,\ldots,n-1$. We show in Appendix A that for every such quantum realisation, one can construct an SOHS model of the form \eqref{sohs} that reproduces the same statistics deterministically. 
Consequently, every separable measurement generates correlations in the swap-steering network [Fig. \ref{fig2}] that can be explained via an SOHS model.

Consider now an entangled measurement $\{E_b\}$ whose entanglement witness is given by $\mathcal{W}_{\{E_b\}}$ such that it satisfies the criterion \eqref{eq1}. The operator $\mathcal{W}_{\{E_b\}}$ can be expressed using the tomographically complete set of operators stated above as $\mathcal{W}_{\{E_b\}}=-\sum_{i_1,\ldots i_N=1}^{d^2}\beta_{i_1,\ldots i_N}\tau_{i_1}\otimes\ldots\tau_{i_N}$. Utilising this entanglement witness, let us now propose the following swap-steering inequalities 
\begin{eqnarray}\label{Witunivm}
    \mathcal{S}_{\{E_b\}}=\max_{b}\sum_{i_1,\ldots i_N}\beta_{i_1,\ldots i_N}p(0,b|i_1\ldots i_N)
\end{eqnarray}
where $p(0,b|i_1\ldots i_N)$ is the probability of obtaining outcome $0$ by trusted Alice given the input $i_1\ldots i_N$ and outcome $b$ by Bob.
Let us now find its SOHS bound.
\begin{fakt}
    Consider the swap-steering scenario in Fig. \ref{fig2} and the functional $\mathcal{S}_{\{E_b\}}$ \eqref{Witunivm}. The maximal value attainable of $\mathcal{S}_{\{E_b\}}$ using an SOHS model is $\beta_{SOHS}=0$. 
\end{fakt}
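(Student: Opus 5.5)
The plan is to substitute the SOHS form \eqref{sohs} directly into the functional \eqref{Witunivm} and reduce each term to the expectation value of the witness $\mathcal{W}_{\{E_b\}}$ on a separable (product) state. Alice's outcome-$0$ element for input $i_1\ldots i_N$ is $M_{0|i_1\ldots i_N}=\tau_{i_1}\otimes\ldots\otimes\tau_{i_N}$. For a fixed Bob outcome $b$, the sum in \eqref{Witunivm} then becomes
\begin{equation*}
\sum_k p_k\sum_{\lambda_1,\ldots,\lambda_N} p(\lambda_1^{(k)},\ldots,\lambda_N^{(k)})\,p(b|\lambda_1^{(k)},\ldots,\lambda_N^{(k)})\,\Tr\Big[\Big(\sum_{i_1\ldots i_N}\beta_{i_1\ldots i_N}\tau_{i_1}\otimes\ldots\otimes\tau_{i_N}\Big)\rho_{\lambda_1^{(k)}}\otimes\ldots\otimes\rho_{\lambda_N^{(k)}}\Big].
\end{equation*}
The operator in brackets is exactly $-\mathcal{W}_{\{E_b\}}$, by the decomposition fixed just before \eqref{Witunivm}. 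This step only uses linearity of the trace and exchanging the finite sums.

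Next I use the defining property of the witness from Fact 1, i.e.\ \eqref{eq1}. The state $\rho_{\lambda_1^{(k)}}\otimes\ldots\otimes\rho_{\lambda_N^{(k)}}$ is a product state, hence separable, so $\Tr(\mathcal{W}_{\{E_b\}}\,\rho_{\lambda_1^{(k)}}\otimes\ldots\otimes\rho_{\lambda_N^{(k)}})\geq 0$. This holds because the witness is built as $\mathcal{W}_{E_k}$ for an entangled element, which is nonnegative on all separable states. The weights $p_k\,p(\lambda^{(k)})\,p(b|\lambda^{(k)})$ are nonnegative, so every term of the sum is $\leq 0$ for every $b$. Taking $\max_b$ gives $\mathcal{S}_{\{E_b\}}\leq 0$ for any SOHS model.

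To show that $0$ is actually attained, I will exhibit an SOHS model that saturates it. The simplest choice is a Bob response $p(b|\lambda)$ that never outputs some outcome $b_0$. Then the sum for $b_0$ is identically zero, and the maximum equals $0$ because all other sums are nonpositive. Alternatively, one can take hidden states on which the witness has zero expectation.

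The main subtlety is the first step: the fact that $\mathcal{W}$ is positive on separable states must be applied to the product of the hidden states. I need to check that the witness inequality in \eqref{eq1}, stated for separable measurements, indeed comes from positivity on all separable states. This follows from how Fact 1 was proven. Everything else is bookkeeping.
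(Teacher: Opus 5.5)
Your proposal is correct and follows essentially the paper's proof: substituting the SOHS form turns each term into $-\mathrm{Tr}(\mathcal{W}\,\rho_{\lambda_1}\otimes\ldots\otimes\rho_{\lambda_N})\le 0$ weighted by nonnegative probabilities. The only difference in that part is that the paper first restricts to a single $k$ by linearity, whereas you keep the mixture over $k$, which is equivalent. Your extra step showing that $0$ is actually attained, by letting Bob never output some outcome $b_0$, is a valid addition that the paper's proof omits.
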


The proof of the above fact is in the Appendix B.
Let us now consider that all the sources produce the maximally entangled state of dimension $d$, that is,  $\ket{\psi_{A_i,B_i}}=\ket{\phi^+_d}=1/\sqrt{d}(\sum_{i}\ket{ii})$ with Bob performing the measurement $\{E_b\}$. Let us say that Bob obtains the outcome $b$. Due to entanglement swapping, the post-measurement state at Alice is $E_b$ with a probability $\Tr(E_b)/d^{N}$ and thus one is left with the expression
\begin{eqnarray}\label{eq9}
    \mathcal{S}_{\{E_b\}}&=&\max_{b}\frac{\Tr(E_b)}{d^N}\sum_{i_1,\ldots i_N}\beta_{i_1,\ldots i_N}\Tr(\tau_{i_1}\otimes\ldots\tau_{i_N} E_b)\nonumber\\&=&\max_{b}\frac{\Tr(E_b)}{d^N}\Tr(-\mathcal{W}E_b)>0.
\end{eqnarray} 
Consequently, we can conclude that every entangled measurement generates swap-steerable correlations and thus entanglement in any quantum measurement can be witnessed in a 1SDI way. For simplicity, we considered that the dimension of the local subsystems on which $\{E_b\}$ acts is the same. However, all the above arguments can straightaway be generalised to the case when they are different by considering that the source generates the maximally entangled state of the corresponding dimension of the local subsystem and Alice performs tomography on it.

{\it{Realistic scenarios---}} 
Detector inefficiency is one of the principal obstacles to the experimental certification of quantum nonlocality. 
If the overall efficiency falls below a critical threshold, the observed statistics can often be explained by local hidden-variable models, giving rise to the well-known detection loophole \cite{Pearle1970, Garg1987}. Consequently, standard Bell and steering tests typically require high detection efficiencies to certify nonclassical correlations \cite{CH, Branciard2012, Bennet2012}. This challenge is particularly relevant for device-independent and semi-device-independent quantum information protocols, whose security fundamentally relies on loophole-free demonstrations of nonlocality \cite{Acin2007, Brunner2014}. Developing methods that remain robust against arbitrarily large losses is therefore of both foundational and practical importance.

Let us suppose now that the detectors can only detect all subsystems together with a probability $\eta$. We then define an additional outcome $\emptyset$ that takes into account all the non-simultaneous-detection events. Consequently, the measurement of Bob in the noisy scenario is given by $ \{E_b^{\mathrm{noisy}},E_{\emptyset}\}$ with $  E_b^{\mathrm{noisy}}=\eta E_b$ which further allows us to conclude that $E_{\emptyset}=(1-\eta)\I$ \cite{NonlocalityReview}. Notice that as Alice is trusted, she can do post-processing on her end, that is, she can faithfully ignore the non-simultaneous-detection events \cite{Srivastav_2022, Masini}. Moreover, let us also consider realistic sources such that the expected state is generated with a visibility $v$, that is, the state can be expressed as $\rho^{(i)}_{A_iB_i}=v\proj{\phi^+_d}+(1-v)\I/d^2$ for any $i$.

Now, we consider again the functional $\mathcal{S}_{\{E_b\}}$ \eqref{Witunivm}. Using states $\rho^{(i)}_{A_iB_i}$, we obtain as Eq. 
\eqref{eq9} that for any $\eta\in[0,1]$, $\mathcal{S}_{\{E_b\}}$ is
\begin{equation}
   \max_{b}\frac{\Tr(\eta E_b)}{d^N}\left( v^N \Tr(\mathcal{W}_{\{E_b\}}(\I-E_b))-\Tr(\mathcal{W}_{\{E_b\}})\right).
\end{equation}
If the visibility $v=1$, then for any entangled measurement ${\{E_b\}}$, $ \mathcal{S}_{\{E_b\}}>0$ for any $\eta\in(0,1]$. Consequently, when the sources are perfect, swap-steering can be detected using any entangled measurement with arbitrarily low detection efficiency.
In the case of standard quantum steering, the critical detection efficiency increases with the visibility of the state, that is, if the source produces states with lower visibility, more efficient detectors are required to witness quantum steering \cite{Srivastav_2022,Vallone_2013}. Unlike this, we observe that for swap-steering, for any non-zero detection, one can always observe swap-steering when the visibility of each source is above a certain critical visibility $v_c$ given by
\begin{eqnarray}
    v_c=\left(\frac{\Tr(\mathcal{W}_{\{E_b\}})}{\Tr(\mathcal{W}(\I-E_b))}\right)^{1/N}.
\end{eqnarray}

For a realisable entanglement swapping experiment \cite{entexp1,doi:10.1126/sciadv.aea8571}, where $\tilde{E}_0=\proj{\psi-}$, with the sources generating the state $v\proj{\psi_-}+(1-v)\I/4$, and the entanglement witness $\mathcal{W}_{\tilde{E}_0}=\I/2-\proj{\psi-}$, we obtain the the critical visibility to be $v_c=(2/3)^{1/2}\approx 0.81$, thus well within the current experimental limit. Consequently, with realisable practical sources, one can observe quantum steering for any non-zero detection efficiency in a loophole-free way. 

{\it{Discussions---}} In this work, we found that similar to detecting entangled quantum states, entanglement in any quantum measurement can also be detected without trusting them. In particular, we proposed an entanglement witness of the Bell-basis measurement that only requires to produce $\{\proj{0},\proj{1},\proj{+},\proj{-}\}$ which is simple to implement. For instance, in a polarisation-based setup, one has to simply rotate the polarisation of a single state $\ket{0}$ to prepare all the other states. We then extended this idea to witness entanglement in quantum measurements in a 1SDI way. For this purpose, we utilised the swap-steering scenario which is the minimal scenario to witness any form of network nonlocality without inputs. 

The usual approach to reducing the detection efficiency in standard nonlocality or steering tests is either using more number of measurements per site \cite{Miklin_2022,Srivastav_2022, Vallone_2013}, non-maximally entangled state \cite{CH} or full statistics \cite{Masini}. To obtain arbitrarily low critical detection efficiency, one requires infinite number of measurements per site, thus, genuine infinite randomness is required to witness them in a loophole-free way, which is impossible. Moreover, in all these approaches, as the visibility of the states decreases, the critical detection efficiency increases. Counterintuitively, we observed here that for arbitrary non-zero detection efficiency of any single entangled measurement, one can observe swap-steering in the quantum network if the sources generate states beyond a certain critical visibility. A simple way to understand this is to observe that the notion of steering in networks is that the quantum state at the trusted side gets updated or steered from a separable to an entangled state, due to the measurement of Bob. Now, if Bob's measurement possess any finite amount of entanglement, along with the sources generating the maximally entangled state, the entanglement-swapping ensures that the reduced state on the trusted side becomes entangled, which is then detected by Alice.

Several interesting problems follow-up from this work. The first one concerns finding optimal witnesses for several other entangled measurements. A challenging problem in this direction will be to witness more structures in the quantum measurement. For instance, is it possible to detect that every measurement element is entangled or not. The most interesting follow-up problem will be to demonstrate the above results experimentally and utilise the scenario for a cryptographic protocol.

\begin{center}
    \textbf{Acknowledgements}
\end{center}
We thank Emanuele Polino for fruitful discussions. We acknowledge the National Science Centre, Poland, grant Opus 25, UMO-2023/49/B/ST2/02468.

\providecommand{\noopsort}[1]{}\providecommand{\singleletter}[1]{#1}%

\onecolumngrid
\section*{Appendix}
\subsection{SOHS models}
Consider the scenario depicted in Fig. 2 of the manuscript that such that the sources generate the classically correlated state given by $\sum_{k}p_k\bigotimes_{i=1}^N\rho^{(k)}_{A_iB_i}$ and Bob's measurement is $\{M_b\}$ where $M_b=\sum_{j}\bigotimes_{i=1}^N\sigma^{(j)}_{i,b}$ such that 
$b=0,\ldots,n-1$. The assemblage, or collection of post-measured states, $\Xi=\{\xi_b\}$ is given by $\xi_b=\sum_{k,j}p_k\Tr_{B}(\bigotimes_{i=1}^N\I_{A_i}\otimes\sigma^{(j)}_{i,b}\rho^{(k)}_{A_iB_i})$.

Let us now construct an SOHS model that reproduces the same assemblage as $\Xi$. The intuition behind the following construction is that all the sources $S_i$ deterministically send the outcome $b$ to Bob and send the corresponding local states to Alice to reproduce the assemblage $\Xi$. For this purpose, we consider that $\lambda_i=(b_i,j_i)$ where $b_i,j_i$ are non-negative integers such that $b_i=0,\ldots,n-1$. The corresponding correlations $p(a,b)$ under the SOHS model is given as 
\begin{eqnarray}
\sum_{k}p_{k}\sum_{\lambda_1^{(k)},\ldots,\lambda_N^{(k)}}p\left(\lambda_1^{(k)}\ldots,\lambda_N^{(k)}\right)\Tr\left(M_{a|i_1\ldots i_N}\bigotimes_{i=1}^N\tilde{\rho}_{\lambda_i^{(k)}}\right)\nonumber\\p\left(b|\lambda_1^{(k)},\ldots,\lambda_N^{(k)}\right)\qquad
\end{eqnarray}
The probability distribution $p(\lambda_1^{(k)},\ldots,\lambda_N^{(k)})$  for all $k$ is given by
\begin{eqnarray}
    p(\lambda_1^{(k)},\ldots,\lambda_N^{(k)})=\begin{cases}
        p_{b,j,k} & \lambda_1=\ldots=\lambda_N=(b,j)\\
        0  &\mathrm{otherwise}
    \end{cases}
\end{eqnarray}
where $p_{b,j,k}=\Tr(\bigotimes_{i=1}^N\I_{A_i}\otimes\sigma^{(j)}_{i,b}\rho^{(k)}_{A_iB_i})$ along with $p\left(b|\lambda_1^{(k)},\ldots,\lambda_N^{(k)}\right)$ given as
\begin{eqnarray}
p\left(b|\lambda_1^{(k)},\ldots,\lambda_N^{(k)}\right)=\begin{cases}
        1 & \lambda_1=(b,j) \qquad \forall j,k\\
        0  &\mathrm{otherwise}
    \end{cases}.
\end{eqnarray}
The corresponding local states sent by the source $S_i$ to Alice is $\tilde{\rho}_{\lambda_i^{(k)}}=\Tr_B(\I_{A_i}\otimes\sigma^{(j)}_{i,b}\rho^{(k)}_{A_iB_i})/p_{b,j,k}$.

\subsection{Upper-bound for SOHS models}
\setcounter{fakt}{1}
\begin{fakt}
    Consider the swap-steering scenario in Fig. 2 of the manuscript and the functional $\mathcal{S}_{\{E_b\}}$ given by 
    \begin{eqnarray}\label{Wituniv}
    \mathcal{S}_{\{E_b\}}=\max_{b}\sum_{i_1,\ldots i_N}\beta_{i_1,\ldots i_N}p(0,b|i_1\ldots i_N)
\end{eqnarray}
    The maximal value attainable of $\mathcal{S}_{\{E_b\}}$ using an SOHS model is $\beta_{SOHS}=0$. 
\end{fakt}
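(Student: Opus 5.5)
The plan is to substitute the SOHS form \eqref{sohs} into the functional \eqref{Wituniv}, fix Bob's outcome $b$, and recognise the resulting sum over $\beta_{i_1,\ldots i_N}$ as minus the expectation value of the witness $\mathcal{W}_{\{E_b\}}$ on a separable operator. The key structural input is that under an SOHS model Alice's conditional (unnormalised) state for each Bob outcome is a convex combination of product states.

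First, for fixed $b$, I use that Alice's outcome $0$ for input $i_1\ldots i_N$ corresponds to $M_{0|i_1\ldots i_N}=\tau_{i_1}\otimes\ldots\otimes\tau_{i_N}$. Then
\begin{equation*}
\sum_{i_1,\ldots i_N}\beta_{i_1,\ldots i_N}p(0,b|i_1\ldots i_N)=\mathrm{Tr}\Big[\Big(\sum_{i_1,\ldots i_N}\beta_{i_1,\ldots i_N}\tau_{i_1}\otimes\ldots\otimes\tau_{i_N}\Big)\sigma_b\Big]=-\mathrm{Tr}\big(\mathcal{W}_{\{E_b\}}\sigma_b\big),
\end{equation*}
where
\begin{equation*}
\sigma_b=\sum_k p_k\sum_{\lambda_1,\ldots,\lambda_N}p(\lambda_1^{(k)},\ldots,\lambda_N^{(k)})\,p(b|\lambda_1^{(k)},\ldots,\lambda_N^{(k)})\,\rho_{\lambda_1^{(k)}}\otimes\ldots\otimes\rho_{\lambda_N^{(k)}}.
\end{equation*}
This step uses only linearity of the trace and the decomposition $\mathcal{W}_{\{E_b\}}=-\sum\beta\,\tau\otimes\ldots\otimes\tau$. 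It is the first use of the assumption that Alice's measurements are exactly the stated projector products.

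Second, I observe that $\sigma_b$ is a nonnegative combination of product states, so it is separable, possibly unnormalised. If $\mathrm{Tr}\sigma_b=0$, the value is $0$. Otherwise $\sigma_b/\mathrm{Tr}\sigma_b$ is a separable state.

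The main point, and the only step needing care, is that the witness from the Fact \eqref{eq1} is nonnegative on every separable state. The Fact is phrased via separable measurements, so I would note that its proof takes $\mathcal{W}=\mathcal{W}_{E_k}$, a standard state entanglement witness with $\mathrm{Tr}(\mathcal{W}\sigma)\geq0$ for every separable $\sigma$. I will use that property directly. It gives $-\mathrm{Tr}(\mathcal{W}_{\{E_b\}}\sigma_b)\leq0$ for every $b$, hence the maximum over $b$ is $\leq 0$.

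Finally, to show the bound $0$ is attained, I exhibit an SOHS model in which Bob always outputs some $b'\neq b$ for each $b$ in turn, or more simply one with $p(b|\lambda)=0$ for a given $b$. Then $\sigma_b=0$ and that term vanishes. Alternatively, I can point out that ``maximal value'' is meant as the upper bound. Together these give $\beta_{SOHS}=0$.
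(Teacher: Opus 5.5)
Your proposal is correct and follows essentially the same route as the paper. You substitute the SOHS decomposition, identify $\sum\beta_{i_1\ldots i_N}p(0,b|i_1\ldots i_N)$ with $-\mathrm{Tr}(\mathcal{W}\,\cdot)$ evaluated on product states $\rho_{\lambda_1}\otimes\ldots\otimes\rho_{\lambda_N}$, and use that the witness is nonnegative on separable states. The differences are cosmetic: bundling everything into the separable operator $\sigma_b$ spares you the paper's reduction to a single $k$, and you also check that $0$ is attained, which the paper leaves implicit.
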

\begin{proof}
    Let us recall that for correlations admitting a SOHS model, we have that
    \begin{eqnarray}
        p(0,b|i_1\ldots i_N)= \sum_{k}{p_k}\sum_{\lambda_1^{(k)},\ldots,\lambda_N^{(k)}}p(\lambda_1^{(k)})\ldots p(\lambda_N^{(k)})\nonumber\\ \Tr{[M_{0|i_1\ldots i_N}\rho_{\lambda_1^{(k)}}\otimes\ldots\rho_{\lambda_N^{(k)}}]} p(b|\lambda_1^{(k)},\ldots,\lambda_N^{(k)})
    \end{eqnarray}
for all $b$. As $\mathcal{S}_{\{E_b\}}$ is linear over the probabilities $p(0,b|i_1\ldots i_N)$ we can straightforwardly optimise over models with $p_{k'}=1$ for some $k=k'$. Consequently, we have from \eqref{Wituniv} that
\begin{eqnarray}\label{A21}
    \mathcal{S}_{\{E_b\}}=\max_b\sum_{\lambda_1,\ldots,\lambda_N}p(\lambda_1)\ldots p(\lambda_N \Gamma(\rho_{\lambda_1}\otimes\ldots\rho_{\lambda_N}) )\nonumber\\p(b|\lambda_1,\ldots,\lambda_N)\quad
\end{eqnarray}
where 
\begin{eqnarray}\label{gamma11}
    \Gamma(\rho_{\lambda_1}\otimes\rho_{\lambda_n})=\sum_{i_1,\ldots i_N}\beta_{i_1,\ldots i_N}p(0|i_1\ldots i_N,\rho_{\lambda_1}\otimes\ldots\rho_{\lambda_N}).
\end{eqnarray}
Expanding the above formula \eqref{gamma11}, by recalling Alice's measurements $\mathcal{A}_{i_1\ldots i_N}$, we obtain
\begin{eqnarray}
    \Gamma(\rho_{\lambda_1}\otimes\ldots\rho_{\lambda_N})&=& \sum_{i_1,\ldots i_N}\beta_{i_1,\ldots i_N}\Tr(\tau_{i_1}\otimes\ldots\tau_{i_N} \rho_{\lambda_1}\otimes\ldots\rho_{\lambda_N}) \nonumber\\&=&-\Tr(\mathcal{W}\rho_{\lambda_1}\otimes\ldots\rho_{\lambda_N})\leq 0.
\end{eqnarray}
Thus, we have from \eqref{A21} that for correlations admitting a SOHS model
\begin{eqnarray}
    \mathcal{S}_{\{E_b\}}\leq0.
\end{eqnarray}
This concludes the proof.
\end{proof}

\end{document}